\renewcommand{\fnum@figure}{Fig. \thefigure}
\newcolumntype{P}[1]{>{\centering\arraybackslash}p{#1}}
\newtheorem{theorem}{Theorem}
\newtheorem{lemma}{Lemma}
\newtheorem{corollary}{Corollary}
\newtheorem{proposition}{Proposition}
\newtheorem{conjecture}{Conjecture}
\newtheorem{sketch}{Sketch of Proof}
\def\BibTeX{{\rm B\kern-.05em{\sc i\kern-.025em b}\kern-.08em
		T\kern-.1667em\lower.7ex\hbox{E}\kern-.125emX}}
\begin{document}
	
	\title{Joint Relay Selection and Power Control that aims to Maximize Sum-Rate in Multi-Hop Networks
	}
	
	\author{
		Shalanika Dayarathna,~\IEEEmembership{Member,~IEEE,} Rajitha Senanayake,~\IEEEmembership{Member,~IEEE,}\\  and Jamie Evans,~\IEEEmembership{Senior Member,~IEEE}
	}
	\maketitle
	
	\begin{abstract}		
		Focusing on the joint relay selection and power control problem with a view to maximizing the sum-rate, we propose a novel sub-optimal algorithm that iterates between relay selection and power control. The relay selection is performed by maximizing the minimum signal-to-interference-plus-noise-ratio (as opposed to maximizing the sum-rate) and the power control is performed using a successive convex approximation. By comparing the proposed algorithm with existing solutions via extensive simulations, we show that the proposed algorithm results in significant sum-rate gains. Finally, we analyze the two-user multi-hop network and show that optimum transmit power of at least for two transmitting nodes can be found using binary power allocation. 
	\end{abstract}
	
	\begin{IEEEkeywords}
		Achievable sum-rate maximization, relay selection, power control, multi-user, multi-hop.	
	\end{IEEEkeywords}
	
	\section{Introduction}\label{section-intro}
	Relay networks have been comprehensively studied in the literature. However, most existing works focus on single or dual-user networks and/or dual-hop relaying \cite{electronics9030443} and only limited attention has been paid to general multi-user multi-hop networks. Such general networks play an important role in applications such as ad-hoc sensor networks, unmanned aerial vehicle (UAV) communication and vehicle-to-vehicle (V2V) communication \cite{5711-8}. In this paper, we focus on general multi-user, multi-hop relay networks and fill the important gap of joint relay selection and power control that aims to maximize the achievable sum-rate.
	
	First let us consider the relay selection problem. Due to the inherent interference and the competition among multiple users the relay selection problem becomes extremely challenging when several source-destination (S-D) pairs are involved. In \cite{2809748}, a sub-optimal decentralized relay selection strategy is proposed to maximize the minimum signal-to-interference-plus-noise-ratio (SINR) of such a network, while in \cite{3177187}, an efficient algorithm based on dynamic programming is proposed to obtain the optimal relay selection. Work on the multi-user relay selection problem with focus on maximizing the overall network sum-rate in DF relay networks is limited to dual-hop relay networks. In \cite{7343553}, orthogonal channels are adopted to remove interference between S-D pairs thus simplifying the relay selection problem to an assignment problem. On the other hand, in \cite{9771581}, the interference in the second hop has been estimated to reformulate the relay selection problem as an updated assignment problem. While recent works in the area of multi-hop relay networks focus on machine learning techniques to perform relay selection \cite{e23101310}, they are limited to single-user networks.
	
	Next, let us consider the power control problem that focus on maximizing the achievable sum-rate. The power control of a relay network with amplify-and-forward (AF) relaying is considered in \cite{080485} by using the successive convex approximation known as geometric programming, which is accurate in the high SNR regime. Taking a different approach, a distributed power control strategy is proposed in \cite{2121064} for a multi-user dual-hop relay network that consists of a single relay node. Recently, the concept of SNR matching is considered as the optimum power control when multiple S-D pairs do not create interference to each other. This can be achieved due to orthogonal transmissions between different S-D pairs \cite{110906}. Transmit power control in the presence of interference is considered in \cite{2431714}, where the interference is approximated by a lower bound to reformulate the power control problem as a concave optimization problem for a given relay assignment. Taking a different approach, the optimum power control for two-user dual-hop relay networks is derived analytically in \cite{9771581}. 
	
	While existing works focus on either the relay selection problem or the power control problem in multi-user, multi-hop relay networks, as far as we are aware there have been no work that considers the joint relay selection and power control problem in this general relay network. Given the dependency between relay selection and power control, the joint optimization is important to gain the optimum sum-rate performance. In this paper, we consider a multi-hop DF relay network with multiple S-D pairs and analyze the joint relay selection and power control problem with the aim of optimizing the achievable sum-rate. The contributions of this paper are listed as follows. 
	\begin{itemize}
		\item We consider five relay selection strategies that has been proposed for maximization of minimum SNR and analyze their suitability when the objective is achievable sum-rate maximization. This contribution is presented in Section \ref{section-relay}. We show that the dynamic programming based relay selection strategy with the objective of maximizing the minimum SINR achieves better sum-rate performance compared to other sub-optimal algorithms with the sum-rate maximization objective.
		\item As the main contribution, the joint relay selection and power control problem is considered for a general multi-user, multi-hop relay network and a sub-optimal algorithm that uses the dynamic programming based max-min relay selection and the tight lower bound approximation based power control is proposed. This result is presented in Algorithm \ref{Algorithm_72}. Furthermore, the performance of our proposed sub-optimal algorithm is compared against the existing resource allocation techniques, revealing that the proposed algorithm has better achievable sum-rate performance compared to the existing techniques.
		\item Under the special case of two-user multi-hop relay networks, we prove that the optimum power allocation such that the achievable sum-rate is maximized can be found analytically. This is achieved when at least two transmitting nodes transmit with binary power allocation. This contribution is presented in Theorem \ref{theorem_7.2}.
	\end{itemize} 	
	The rest of the paper is organized as follows. In Section \ref{section-model}, we provide the system model and the optimization problem formulation for a multi-user, multi-hop DF relay network with multiple relay nodes in each hop. Next, the relay selection problem is analyzed in Section \ref{section-relay} while the proposed solution and the sub-optimal algorithm is given in Section \ref{section-algo} with numerical examples in Section \ref{section-simulation}. A special case of two-user network is analyzed with respect to power control in Section \ref{section-special} and finally, the conclusions are given in Section \ref{section-conclusion}.
	
	\section{System Model and Optimization Problem Formulation}\label{section-model}	
	We study a multi-user relay network with $N$ S-D pairs as illustrated in Fig.~ \ref{figure71}. The information transmitted by the source nodes are carried over to the corresponding destination nodes via a multi-hop relay network composed of $L$ hops where each hop consists of $M$ DF relays such that $M\!\ge \!N$. Similar to \cite{2809748,130815,3177187}, we consider that a given relay node can support only one S-D pair and in each hop a given S-D pair is supported by only one relay node. This is motivated by the benefits in terms of minimization of network power consumption, processing complexities in relay nodes and the relay synchronization requirements. Therefore, $N$ relays are selected in each hop from $M$ relays and the relay node chosen in hop $l$ for S-D pair $i$ is denoted as $r_{i,l}$. Similar to \cite{2008.107,071030,5982498,6364160}, inter-hop interference is neglected due to scheduling of transmissions. 
	\begin{figure}
		\centering
		\includegraphics[width=0.85\textwidth]{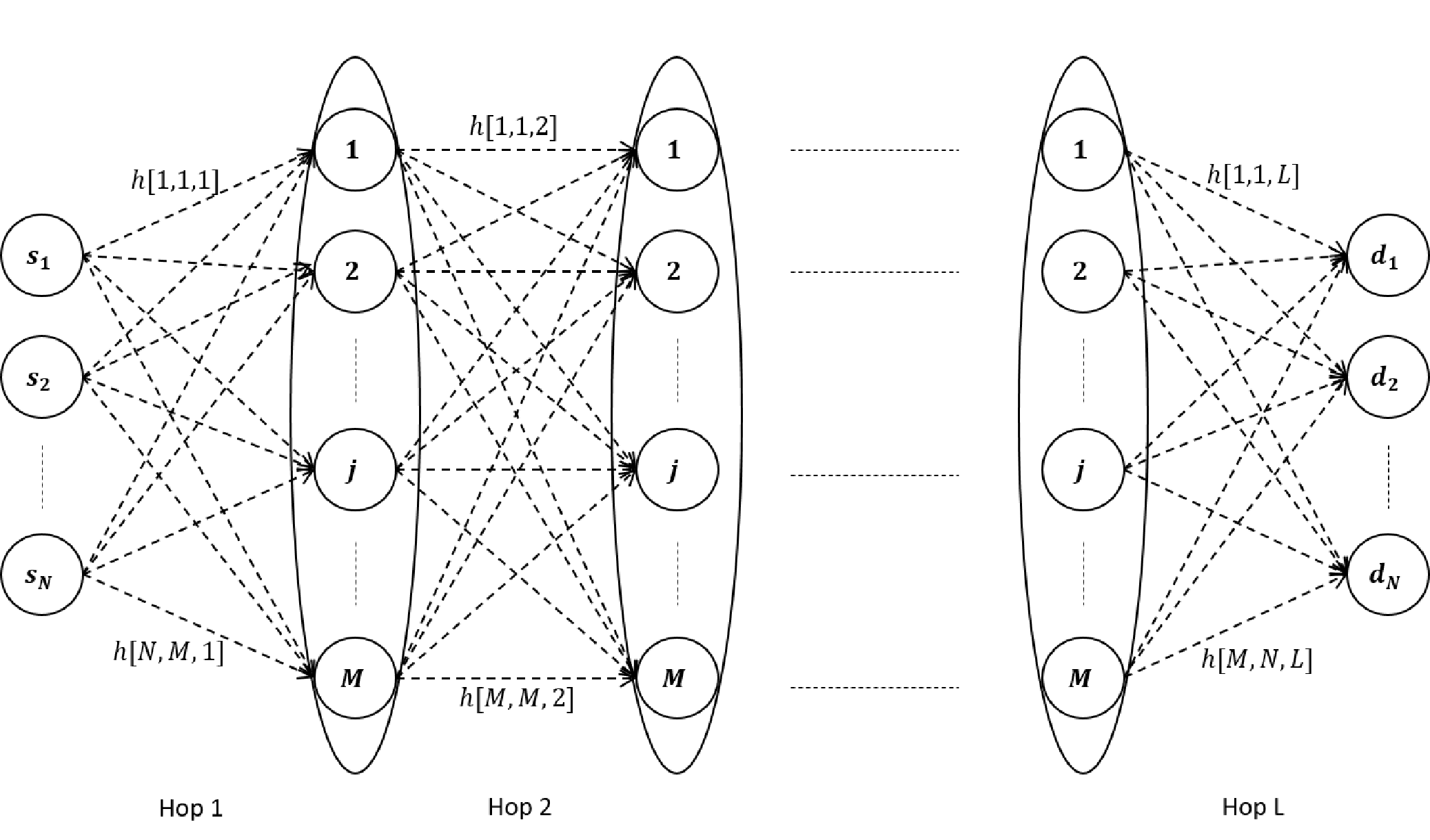}
		\centering\caption{A multi-user, multi-hop relay network}
		\label{figure71}
	\end{figure}
	Further, the channel gain between receiver $j$ and transmitter $i$ in hop $l$ is modeled as a random variable denoted by $h[i,j,l]$. When we consider the communication in hop $l$, the received signal at node $j$ can be expressed as,
		\begin{align}
	y[j,l]=\sum_{i=1}^{N}h[r_{i,l-1},j,l]\, x[r_{i,l-1},l-1]+n[j,l],
	\end{align}
	where $x[r_{i,l-1},l\!-\!1]$ is the transmitted data symbol from node $r_{i,l-1}$ in hop $l\!-\!1$, $E\{|x[r_{i,l-1},l\!-\!1]|^2\}\!=\!P[r_{i,l-1},l\!-\!1]$ with $P[r_{i,l-1},l\!-\!1]$ denoting the transmit power of node $r_{i,l-1}$ in hop $l\!-\!1$ and $n[j,l]$ is the additive white Gaussian noise (AWGN) with mean zero and variance $\sigma^2$. 
	
	In DF relay networks, the minimum SINR across all the hops for a given S-D pair determines its received end-to-end SINR. Thus, for a given relay assignment and power allocation the achievable rate for S-D pair $i$ can be written as,
	\begin{align}
	R_i = \log_2 \bigg(1+{\underset{l \in \{1,...,L\}} {\textrm{min} }}\{\gamma[i,l]\}\bigg),
	\end{align}
	where $\gamma[i,l]$ denotes the received SINR at node $r_{i,l}$ and given by,
	\begin{align}\label{snr7_1}
	\gamma[i,l] = \dfrac{P[r_{i,l-1},l-1]|h[r_{i,l-1},r_{i,l},l]|^2}{\sigma^2+\sum_{j \neq i}^N P[r_{j,l-1},l-1]|h[r_{j,l-1},r_{i,l},l]|^2},
	\end{align}
	with $r_{i,0}=r_{i,L}=i$. Next, we consider joint relay selection and power control that aims to maximize the achievable sum-rate and formulate the optimization problem as, 
	\begin{align}
	& {\underset{r_{i,l},P[r_{i,l},l] \; \forall i,l} {\textrm{max} }}\; \sum_{i=1}^{N}\log_2\bigg(1+ {\underset{l \in \{1,...,L\}} {\textrm{min} }}\;\{\gamma[i,l]\}\bigg) \nonumber\\
	&{\rm{s.t \;\; \;}}
	0 \le P[r_{i,l},l] \le P ~~ \forall l, i, 
	\nonumber\\
	& \qquad r_{i,l} \neq r_{j,l} ~~ \forall l, i \neq j, 
	\nonumber\\
	& \qquad r_{i,l} \in \{1,2,...,M\} ~~ \forall l, i,
	\label{eq_max_71}
	\end{align}
	where $P$ is the maximum transmit power for each transmission. The optimization problem in \eqref{eq_max_71} is non-convex. Due to the integer nature of $r_{i,l}$ and the existence of multiple hops combined with non-convex nature makes solving \eqref{eq_max_71} an exceptionally challenging problem for a general multi-hop relay network with multiple users. As such, we proceed with two steps to solve this optimization problem.
	
	First, we consider a given power allocation $P[r_{i,l},l] \; \forall i,l$ where $l \in \{0,\dots,L-1\}$ and write the relay selection problem as,
	\begin{align}
	& {\underset{r_{1,l},...,r_{N,l} \; \forall l} {\textrm{max} }}\; \sum_{i=1}^{N}\log_2\bigg(1+ {\underset{l \in \{1,...,L\}} {\textrm{min} }}\;\{\gamma[i,l]\}\bigg) \nonumber\\
	&{\rm{s.t \;\; \;}}
	r_{i,l} \neq r_{j,l}, ~~ \forall l, i \neq j, 
	\nonumber\\
	& \qquad r_{i,l} \in \{1,\dots,M\}, ~~ \forall l, i.
	\label{eq_max_72}
	\end{align}
	Next, we consider a given relay assignment $[r_{1,l},...,r_{N,l}], \; \forall l$ where $l \in \{1,\dots,L-1\}$ and write the power control problem as,
	\begin{align}
	& {\underset{P[r_{i,l},l] \; \forall i,l} {\textrm{max} }}\; \sum_{i=1}^{N}\log_2\bigg(1+ {\underset{l \in \{1,...,L\}} {\textrm{min} }}\;\{\gamma[i,l]\}\bigg) \nonumber\\
	&{\rm{s.t \;\; \;}}
	0 \le P[r_{i,l},l] \le P, ~~ \forall i,l.
	\label{eq_max_73}
	\end{align}
	We note that solving individual optimization problems in \eqref{eq_max_72} and \eqref{eq_max_73} separately, is still tremendously hard for a general multi-hop relay network with multiple users \cite{080485}. In the following, we first consider the optimization problems in \eqref{eq_max_72} and \eqref{eq_max_73}, separately and then propose an iterative algorithm that combines the proposed solutions in order to provide a novel joint solution.
	
	\section{Relay Selection}\label{section-relay}
	In this section we focus on the optimization problem formulated in \eqref{eq_max_72}. We note that the optimum relay assignment would involve selecting $N$ non-overlapping S-R-D paths such that each user has one distinct path and the achievable sum-rate of all users are maximized. For a relay network with $M$ relays and $L$ hops there are $M^{L-1}$ possible S-R-D path combinations for each user and $\prod_{i=0}^{N-1} (M-i)^{L-1}$ possible paths for $N$ users \cite{2809748}. 
	The achievable sum-rate depends on the combination of $N$ SINR values instead of the effective SINR of each user. As a result, for a given relay assignment in a given hop, the best path and its respective $N$ SINR values depend on the relay selection in past and future hops. Therefore, unlike the single-user network, we cannot use dynamic programming based approach (ex: Viterbi algorithm) to find the optimum relay assignment in a multi-user relay network. 
	As such, the optimal relay selection involves exponential complexity and different sub-optimal relay selection strategies with low complexity have been considered in the literature. Therefore, we consider four sub-optimal relay selection strategies used in literature 
	namely, the hop-by-hop relay selection \cite{2008.107}, ad-hoc relay selection, block-by-block relay selection \cite{071030} and sliding window based relay selection \cite{6364160}. In addition, we evaluate the achievable sum-rate obtained by the optimal relay selection when the objective is maximizing the minimum SINR \cite{3177187}.
	These strategies are detailed in the following. 
	
	\subsection{Hop-by-Hop Relay Selection}
	Under this strategy, the relay selection in each hop is performed independently such that the achievable sum-rate of the current hop is maximized when the signals are transmitted from the relays selected in the previous hop. Therefore, with this strategy, there is no sum-rate optimization involved in the last hop where the destination nodes are fixed. Since, the hop-by-hop relay selection cannot achieve full-diversity, we next consider the ad-hoc relay selection.
	
	\subsection{Ad-hoc Relay Selection}
	Under this strategy, the hop-by-hop relay selection is extended by combining the last two hops together to achieve full diversity. Therefore, while the first $L-2$ relays for each user are selected similar to the hop-by-hop relay selection, the last relay for each user is selected such that the achievable sum-rate of the last two hops is maximized.
	
	\subsection{Block-by-Block Relay Selection}
	Next, we consider the block-by-block relay selection to improve the performance further. Under this strategy, $L$ hops are divided into non-overlapping blocks of $w$ hops and the relays are selected such that the achievable sum-rate of each block is maximized. With this strategy, it is important to ensure that the block size is selected such that the last block would be greater than one. Otherwise, there will be no sum-rate optimization involved with the last hop where the destination nodes are fixed.
	
	\subsection{Sliding Window based Relay Selection}
	Taking a step further, we consider the sliding window based relay selection to remove the dependency of block size on the number of hops and to improve the performance further. Under this strategy, we consider a sliding window of $w$ hops to determine the relay selection in the first hop of the window. For example, we start by considering the first $w$ hops and find the relay selection such that the achievable sum-rate in those $w$ hops is maximized. However, we only fix the relays selected in the first hop. Next, we consider $w$ hops from the second hop to $w+1$ and fix the relays selected for the second hop. We continue this until relay selection is fixed for first $L-w$ hops. Then we consider the last $w$ hops and fix the relays selected for all of them. 
	
	\subsection{Max-Min Relay Selection}
	Finally, we consider the max-min relay selection that is used to minimize the outage probability. Under this strategy, we consider the optimal relay selection when the objective is maximizing the minimum SINR across all users. We use the dynamic programming based algorithm, which has a linear complexity with respect to $L$, proposed in \cite{3177187}. Once the relay selection is completed, we compute the achievable sum-rate according to the objective function in \eqref{eq_max_72}.
	
	We first note that when the objective is maximizing the minimum SINR, the final objective value depends only on the value of the objective function for each block of hops. As such, it is shown that the performance improves with the hop-by-hop relay selection, ad-hoc relay selection, block-by-block relay selection and sliding window based relay selection, respectively \cite{6364160}. However, when the objective is the achievable sum-rate maximization, the final effective sum-rate does not only depend on the achievable sum-rate of each block of hops. It need to be computed based on the effective minimum SINR for each user. As a result, we cannot guarantee that any of these sub-optimal relay selection strategies are always better than the others \cite{my_Thesis}.
	Therefore, in the following example, we compare the average performance of the five relay selection strategies considered in this paper.
	
	\vspace{0.25cm}
	\noindent
	\textbf{Example}: Consider a two-user ($N=2$), multi-hop relay network where the channels between nodes follow a Rayleigh distribution with zero mean and unit variance. For such a network, the gain in achievable sum-rate obtained based on different relay selection strategies compared to that of the hop-by-hop relay selection is given in Table \ref{table1}. We consider different $M$ and $L$ values with $w=2,4$ and the average received SNR of $10$ dB. In order to maintain full diversity gain, we only consider block-by-block relay selection when $L$ can be fully divided by $w$. This introduces one limitation of the block-by-block relay selection, where the block size $w$ needs to be selected depending on the number of hops $L$.
	From the table, we observe that when $L=2$, the ad-hoc relay selection, block-by-block relay selection and sliding window based relay selection have same achievable sum-rate gains. When $L=2$, all three relay selection strategies are equivalent to the optimum relay selection which considers both hops together. Thus, all three relay selection strategies result in same achievable sum-rate. Similarly, when $L=4$ both the block-by-block relay selection and sliding window based relay selection have same achievable sum-rate gains with $w=4$. 
	For any other $L$, the sliding window based relay selection has better achievable sum-rate gain for a given $M$ and $w$ compared to the ad-hoc relay selection and block-by-block relay selection. Further, we can observe that the sliding window based relay selection has much higher achievable sum-rate gain compared to the block-by-block relay selection when $w=4$. We can also observe that unlike the block-by-block relay selection, the sliding window based relay selection with $w=4$ has less sensitivity to increasing $L$ compared to $w=2$. Therefore, we can conclude that increasing $w$ provides higher performance improvements for the sliding window based relay selection compared to the block-by-block relay selection.
	\begin{table}
		\centering
		\caption{Achievable sum-rate gain percentage compared to hop-by-hop relay selection}
		\begin{tabular}{|p{1.75cm}||p{2.1cm}|p{2.1cm}|p{2.1cm}|p{2.1cm}|p{1.7cm}|p{1.7cm}|}\hline
		& \textbf{Sliding window w=2} & \textbf{Sliding window w=4} & \textbf{Block-by-block w=2} & \textbf{Block-by-block w=4} & \textbf{Ad-hoc} & \textbf{Max-Min} \\ \hline  				
		M=2, L=2 & 11.767 & - & 11.767 & - & 11.767 & 2.410 \\ \hline
		M=2, L=4 & 16.303 & 28.058 & 11.694 & 28.058 & 10.010 & 6.442 \\ \hline
		M=2, L=6 & 14.724 & 34.013 & 8.549 & - & 7.578 & 7.383 \\ \hline
		M=2, L=8 & 12.408 & 37.516 & 6.773 & 23.073 & 7.139 & 7.026 \\ \hline
		M=2, L=10 & 8.140 & 36.764 & 3.599 & - & 5.919 & 5.877 \\ \hline
		M=2, L=12 & 4.981 & 36.540 & 1.784 & 13.949 & 4.732 & 5.043 \\ \hline
		M=3, L=2 & 30.163 & - & 30.163 & - & 30.163 & 13.942 \\ \hline
		M=3, L=4 & 28.459 & 52.483 & 24.806 & 52.483 & 21.303 & 30.744 \\ \hline
		M=3, L=6 & 28.098 & 53.833 & 23.716 & - & 19.015 & 45.387 \\ \hline
		M=3, L=8 & 25.217 & 52.207 & 19.926 & 48.881 & 17.338 & 52.996 \\ \hline
		M=3, L=10 & 22.010 & 48.363 & 17.262 & - & 15.686 & 58.645 \\ \hline
		M=3, L=12 & 18.897 & 47.343 & 14.863 & 42.048 & 14.203 & 65.579 \\ \hline
		M=4, L=2 & 40.283 & - & 40.283 & - & 40.283 & 22.234 \\ \hline
		M=4, L=4 & 37.763 & 68.725 & 33.768 & 68.725 & 29.983 & 49.917 \\ \hline
		M=4, L=6 & 33.862 & 63.154 & 30.075 & - & 25.896 & 59.840 \\ \hline
		M=4, L=8 & 32.566 & 62.072 & 26.411 & 59.602 & 23.585 & 68.946 \\ \hline
		M=4, L=10 & 29.260 & 59.554 & 23.866 & - & 22.135 & 75.536 \\ \hline
		M=4, L=12 & 26.786 & 57.770 & 21.395 & 51.908 & 20.963 & 81.644 \\ \hline
		\end{tabular}
		\label{table1}
	\end{table}
 
 	When $M=2$ and $w=2$, we can also observe that the achievable sum-rate gain of the block-by-block relay selection is worse than that of the ad-hoc relay selection with increasing $L$. However, with larger $M$ block-by-block relay selection is slightly better than the ad-hoc relay selection. Similarly, when $M=2$, the max-min relay selection has the lowest achievable sum-rate gain for any given $L$. However, with increasing $M$, it outperforms all other relay selection strategies. We can also observe that for larger $M$ and $L$ values, the simple max-min relay selection provides better achievable sum-rate gains compared to the sliding window based relay selection even with $w=4$. 
 	In addition, we can also observe that with increasing $M$, the achievable sum-rate gain of all four strategies increases for a given $L$ and $w$. This can be explained by the improved diversity introduced by increasing $M$. On the other hand, with increasing $L$, the achievable sum-rate gain of the ad-hoc relay selection, block-by-block relay selection and sliding window based relay selection decreases for a given $M$ and $w$ where as that of the max-min relay selection increases for a given $M$. This can be explained by the fact that the max-min relay selection considers the channel gains of all $L$ hops while the other three strategies are unaware of the future channel gains when making the relay selection decision. Therefore, the performance of other relay selection strategies deteriorate with increasing $L$. As such, even though the objective function is different, the consideration of all hops improves the achievable sum-rate obtained with the max-min relay selection.	

	From the above example, we realized that the simple max-min relay selection strategy, which has a linear complexity with respect to $L$, provides better achievable sum-rate performance for larger relay networks where $M>N$ and $L>2$. As the optimal relay selection can be found via exhaustive search for smaller networks, in this paper we focus on the relay selection of larger multi-user, multi-hop relay networks. As such, we propose the use of max-min relay selection strategy to solve the relay selection problem in \eqref{eq_max_72}.

	\section{Joint Relay Selection and Power Control} \label{section-algo}	
	In section \ref{section-relay}, we analyzed the relay selection problem. Therefore, in this section we first focus on the optimization problem formulated in \eqref{eq_max_73} Next, we consider the power control problem for a given relay assignment which is non-convex in relation to $P[r_{i,l},l]$ \cite{080485}  and present an iterative power control algorithm. First, we use the successive convex approximation known as the tight lower bound approximation proposed in \cite{ext-thesis1,918099} and approximate its objective function as,
	\begin{align} \label{eq_max_12_1}
	\sum_{i=1}^{N}\log_2\bigg(1+{\underset{l \in \{1,...,L\}} {\textrm{min} }}\;\{\gamma[i,l]\}\bigg) \ge \dfrac{1}{\log(2)}\sum_{i=1}^N a_i \log\bigg({\underset{l \in \{1,...,L\}} {\textrm{min} }}\;\{\gamma[i,l]\}\bigg) + b_i,
	\end{align}
	that is tight at a chosen value $\mathbf{\bar{z}} = [\bar{z_1},...,\bar{z_N}]$ when the constants $a_i$ and $b_i$ are chosen as,
	\begin{align*}
	a_i = \dfrac{\bar{z}_i}{1+\bar{z}_i}, \qquad b_i = \log(1+\bar{z}_i) - \dfrac{\bar{z}_i}{1+\bar{z}_i}\log(\bar{z}_i).
	\end{align*}
	with $\bar{z}_i$ denoting the received end-to-end SINR of S-D pair~$i$, computed using the solution achieved via the previous iteration or the initial solution. Using \eqref{eq_max_12_1} we can re-write the achievable sum-rate optimization problem given in \eqref{eq_max_73} as,
	\begin{align}\label{eq_max_746}
	&{\underset{P[r_{i,l},l] \; \forall i,l}{\textrm{max} }}\; 
	\sum_{i=1}^N a_i \log\bigg({\underset{l \in \{1,...,L\}} {\textrm{min} }}\;\{\gamma[i,l]\}\bigg) + b_i \nonumber\\
	&{\rm{s.t \;\; \; }}
	0 \le P[r_{i,l},l] \le P, \forall i, l, \textrm{ where } l\in \{0,...,L-1\}.
	\end{align} 
	Using the variable transformations $P[r_{i,l},l] = e^{q[r_{i,l},l]}$ and  $t[i] = \log \bigg({\underset{l \in \{1,...,L\}} {\textrm{min} }}\;\{\gamma[i,l]\}\bigg)$, \eqref{eq_max_746} can be reformulated as,
	\begin{align}\label{eq_max_747}
	& {\underset{q[r_{i,l},l] \; \forall i,l} {  \textrm{max} }}\; 
	\sum_{i=1}^N a_i \,t[i] + b_i  \nonumber\\
	&{\rm{s.t \;\; \;}}  \nonumber \\
	&t[i] \le q[r_{i,l},l] + \log(|h[r_{i,l},r_{i,l+1},l+1]|^2) - \log\bigg(\sigma^2 + \sum_{j \neq i}^N e^{q[r_{j,l},l]}|h[r_{j,l},r_{i,l+1},l+1]|^2\bigg), \forall i, l,\nonumber\\
	&q[r_{i,l},l] \le \log(P), \; \forall i, l, 
	\end{align}	
	As a result of the convex nature of the log-sum-exp terms the optimization problem in \eqref{eq_max_747} is concave for a given relay assignment. Therefore, the coefficients $a_i$ and $b_i$ can be computed in each iteration using the results of the previous iteration. Then, approximated problem in \eqref{eq_max_747} can be solved via a gradient decent algorithm or using any existing convex solver. Due to the monotonically improving objective function resulted from the tight lower bound approximation, the sequence always converges \cite{ext-thesis1}. Therefore, for a given relay assignment, the approximated value of the optimum achievable sum-rate can be found by solving \eqref{eq_max_747} iteratively.
	
	\subsection*{Proposed Joint Solution}\label{section-joint}
	Next, we combine the iterative power control solution with relay selection and propose Algorithm~\ref{Algorithm_72} that aims to maximize the achievable sum-rate under the joint optimization. At the start of Algorithm~\ref{Algorithm_72}, the optimum achievable sum-rate, $R^{*}$, and all transmit powers are initialized to zero and $P$, respectively. In iteration $n$, we consider a given transmit power allocation and first solve the relay selection problem using the dynamic programming based max-min relay selection proposed in \cite{3177187} and assign the chosen relay nodes to a matrix marked by $\mathbf{X}$. $\mathbf{X}$ is assigned to the optimum relay assignment matrix, $\mathbf{X^{*}}$, if the resulting achievable sum-rate denoted by $R^{(n)}$ is higher than $R^{*}$. Therefore, after the first iteration, the relay selection is only changed if a different relay selection resulted in a higher $R^{(n)}$ with the updated transmit power allocation. 
	\begin{algorithm}
		\DontPrintSemicolon 
		\SetAlgoLined		
		$n=1, \; \mathbf{X^{*}}\leftarrow \{\}, \, R^*\leftarrow 0, \, P[r_{i,l},l]\leftarrow P, \forall i, l$    \;
		\While{true}{
			$[\mathbf{X},R^{(n)}] \leftarrow $ solution to relay selection problem using \cite{3177187} \;
			\If{$(R^{(n)}-R^*)/R^{(n)} > e_{\textrm{th}}$}{
				$r_{i,l} \leftarrow$ $X(l,i) \; \forall i, l$\;	
				$R^* \leftarrow R^{(n)}, \; \mathbf{X^{*}} \leftarrow \mathbf{X}$ \;
			}
			
			$m=1$ \;
			\While{true}{
				$\mathbf{Q}^{(m)} \leftarrow $ solution to problem \eqref{eq_max_747} \;
				\If{$|\;\mathbf{Q}^{(m)}-\mathbf{Q}^{(m-1)}\;|/|\;\mathbf{Q}^{(m)}\;| < e_{\textrm{th}}$}{
					break \;
				}
				m $\leftarrow$ m+1 \;
			}
			$P[r_{i,l},l] = e^{q[r_{i,l},l]}, \forall i, l$\;
			$R^{(n)} \leftarrow $ sum-rate for $r_{i,l}$ and $P[r_{i,l},l], \; \forall i,l$\;
			
			\uIf{$(R^{(n)}-R^*)/R^{(n)} > e_{\textrm{th}}$}{
				$R^* \leftarrow R^{(n)}, \; n\leftarrow n+1$ \;
			}
			\Else{
				stop \;
			}
		}
		\caption{Proposed Joint Solution}
		\label{Algorithm_72}
	\end{algorithm} 	
	Then we continue to solve the power control problem iteratively. In the $m^{\textrm{th}}$ iteration, the concave optimization problem in \eqref{eq_max_747} is solved and the solution is assigned to $L \times N$ matrix $\mathbf{Q}^{(m)}= \{q[r_{1,l},l],...,q[r_{N,l},l]\}, \forall l.$ Then a user defined threshold, $e_{\textrm{th}}$, is used to compare the calculated error. The objective function monotonically improves under the tight lower bound approximation and always converges \cite{ext-thesis1}. Therefore, within the inner loop, the achievable sum-rate increases at each iteration and $P[r_{i,l},l],\forall i,l$ is only changed at iteration $n$ if a different transmit power allocation provides a higher $R^{(n)}$ under the new relay assignment. Thus, the achievable sum-rate is monotonically improved in the $n^{\textrm{th}}$ iteration of the outer loop until it converges to a solution. Due to the non-convex nature of the optimization problem \eqref{eq_max_71}, we note there might exist multiple local peak points. As a result, the objective function would be converged to one of the local solutions in the end. Since, we cannot guarantee the convergence to the global optimum solution, the proposed algorithm is considered to be sub-optimal. In the implementation, the algorithm is considered to be converged to a local solution when the relative gap between $R^{(n)}$ and $R^{*}$ is less than $e_{\textrm{th}}$.
	
	\section{Special Case of Two-User Network} \label{section-special}
	In this section, we consider the special case of two-user networks  and prove that for at least two transmitting nodes the binary power allocation is optimum. We start by presenting Lemma \ref{lemma_7.2}.
	\begin{lemma}\label{lemma_7.2}
		An optimum power allocation that maximizes the achievable sum-rate in a multi-hop relay network with two-users can be found such that the resulting SINRs for each user in all the hopes are equal.
	\end{lemma}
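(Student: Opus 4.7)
The plan is to take an arbitrary optimum allocation $P^*[r_{i,l},l]$, extract the rate-limiting end-to-end SINRs $\gamma^*_i := \min_{l}\gamma[i,l]$ for $i\in\{1,2\}$ (which alone determine the sum-rate $\sum_i\log_2(1+\gamma^*_i)$), and construct a new allocation that achieves $\gamma[i,l]=\gamma^*_i$ for \emph{every} hop $l$ while preserving the sum-rate. The crucial structural observation is that, because inter-hop interference is neglected, the SINRs $\gamma[1,l]$ and $\gamma[2,l]$ at hop $l$ depend only on the two transmit powers used at hop $l-1$; the hops therefore decouple, and the modification can be carried out one hop at a time without perturbing SINRs at any other hop.

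At a fixed hop $l$, write $p_i:=P[r_{i,l-1},l-1]$ and $h_{ij}:=|h[r_{i,l-1},r_{j,l},l]|^2$. I would solve the $2\times 2$ system $\gamma[1,l](p_1,p_2)=\gamma^*_1$ and $\gamma[2,l](p_1,p_2)=\gamma^*_2$ in closed form, obtaining a unique solution $(p_1',p_2')$ that is positive whenever the SINR-feasibility condition $h_{11}h_{22}>\gamma^*_1\gamma^*_2 h_{12}h_{21}$ holds. This condition is inherited from the original allocation: a direct computation shows $\gamma[1,l]\gamma[2,l]<h_{11}h_{22}/(h_{12}h_{21})$ whenever $\sigma^2>0$, and $\gamma^*_i\le\gamma[i,l]$ then yields the required bound. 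The next step is to show $p_1'\le p_1$ and $p_2'\le p_2$, which guarantees $p_i'\le P$ and hence feasibility of the modified allocation. Replacing the transmit powers at hop $l-1$ by $(p_1',p_2')$ changes only the SINRs at hop $l$; performing this replacement independently for $l=1,\dots,L$ produces an allocation with $\gamma[i,l]=\gamma^*_i$ uniformly in $l$, so each user's rate $\log_2(1+\min_l\gamma[i,l])=\log_2(1+\gamma^*_i)$ is unchanged and the new allocation is itself optimum.

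The main obstacle is the componentwise bound $p_i'\le p_i$. My approach is to express $(p_1',p_2')$ as the fixed point of the composition $p_1\mapsto f(g(p_1))$, where $f(p_2)=\gamma^*_1(\sigma^2+p_2 h_{21})/h_{11}$ and $g(p_1)=\gamma^*_2(\sigma^2+p_1 h_{12})/h_{22}$ are the two increasing affine maps encoding the SINR equalities; the feasibility condition forces the slope $\gamma^*_1\gamma^*_2 h_{12}h_{21}/(h_{11}h_{22})$ to be strictly less than one, so $f\circ g$ is a contraction. Starting the iteration at the original $p_1$ yields a monotonically decreasing sequence, because $g(p_1)\le p_2$ and $f(p_2)\le p_1$ follow directly from $\gamma[i,l]\ge\gamma^*_i$; the sequence converges to $p_1'$, giving $p_1'\le p_1$, and then $p_2'=g(p_1')\le g(p_1)\le p_2$. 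Equivalently, one may argue by an alternating reduction procedure that successively lowers $p_1$ and $p_2$ to restore the two targets, each step strictly decreasing both powers. With this feasibility step in hand, the per-hop replacement described above completes the construction and proves the lemma.
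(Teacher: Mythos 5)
Your proposal is correct and rests on the same core construction as the paper's proof: starting from an optimum allocation, lower the two transmit powers feeding each hop so that both SINRs in that hop drop exactly to the end-to-end values $\gamma_1^*,\gamma_2^*$, and observe that this leaves the sum-rate unchanged. Where you differ is in execution. The paper organizes the argument as a case analysis (only one user has a strict surplus somewhere; both have surpluses in the same hop; surpluses occur in different hops, which it reduces back to the first case), and at the key step it merely asserts that the adjusted powers ``fall within $0$ and $P$'' without exhibiting the condition that makes the $2\times 2$ system solvable. You instead treat every hop uniformly -- exploiting the decoupling of hops through the no-inter-hop-interference assumption -- and you actually prove feasibility: the product bound $\gamma_1^*\gamma_2^* h_{12}h_{21} < h_{11}h_{22}$ inherited from $\sigma^2>0$ guarantees the affine map $f\circ g$ is a contraction with a unique nonnegative fixed point, and the monotone iteration started at the original powers gives the componentwise bounds $p_i'\le p_i\le P$. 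This fills the genuine gap in the paper's write-up and removes the need for the scenario decomposition; the paper's version, in exchange, is shorter and makes the ``which user is slack where'' intuition more visible. One small point worth stating explicitly in your version: if one of the original powers at a hop is zero, the corresponding user's end-to-end SINR is zero and the system degenerates, but the construction still goes through trivially.
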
 
	\begin{proof} Please refer to Appendix \ref{app7:2}. \end{proof}
	
	\noindent Next, we construct the following theorem on the basis of Lemma \ref{lemma_7.2}.	
	\begin{theorem}\label{theorem_7.2}
		Optimum power allocation for at least two transmitting nodes in a two-user multi-hop DF relay network can be obtained using binary power allocation. SINR matching\footnote{Under SINR matching, the transmit power of each transmitting node in a given S-R-D path is selected such that the SINR of each receiving node in that path are equal.} can be used to obtain the optimum transmit power allocation of the rest of the nodes.
	\end{theorem}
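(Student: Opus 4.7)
The approach is to build directly on Lemma \ref{lemma_7.2}. Because that lemma guarantees an optimal allocation under which each user's received SINR is equalised across hops, the hop-wise matching identities derived from \eqref{snr7_1} give, at each hop $l$, a $2\times 2$ linear system in the two powers $P[r_{1,l},l]$ and $P[r_{2,l},l]$ with coefficients depending only on the hop-$(l+1)$ channel gains and the pair $(\gamma_1,\gamma_2)$. Solving each such system in closed form expresses every one of the $2L$ transmit powers as a rational function $P[r_{i,l},l](\gamma_1,\gamma_2)$. The first step of the proof is to verify that, on the domain where the $2\times 2$ determinant is positive, each such function is strictly increasing in both $\gamma_1$ and $\gamma_2$: intuitively, either a larger target SINR or a stronger interferer forces every node in the path to raise its power.

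Having done that, the $2L$-dimensional problem \eqref{eq_max_73} collapses to a two-dimensional one: maximise $\log_2\bigl((1+\gamma_1)(1+\gamma_2)\bigr)$ subject to the $2L$ upper-bound inequalities $P[r_{i,l},l](\gamma_1,\gamma_2)\le P$. Because the reduced objective is strictly increasing in each variable, the optimum $(\gamma_1^{\star},\gamma_2^{\star})$ must lie on the upper-right boundary of the feasible set, where at least one of these inequalities is tight. I would then write the KKT stationarity conditions: two equations in the two unknowns $\gamma_1,\gamma_2$, featuring Lagrange multipliers $\mu_{i,l}\ge 0$ attached to the $2L$ constraints. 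The objective gradient is componentwise strictly positive, and so is every $\nabla P[r_{i,l},l]$ by the monotonicity just established; hence at least one $\mu_{i,l}$ must be strictly positive, which proves that at least one constraint is active and therefore at least one transmitting node uses binary allocation (namely, $P[r_{i,l},l]=P$).

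The central and most delicate step is to promote this ``at least one'' to ``at least two.'' The plan is to argue that a single positive multiplier reduces the two KKT stationarity equations to a single codimension-one alignment between $\nabla f$ and $\nabla P[r_{i,l},l]$ that fails off a measure-zero set of channel realisations, so generically a second constraint must also bind. In the exceptional case when the alignment does hold, I would show that one can slide $(\gamma_1,\gamma_2)$ along the active level curve $P[r_{i,l},l]=P$ without decreasing the objective until a second level curve $P[r_{i',l'},l']=P$ is met, producing an equally optimal point at which two constraints are tight. Each tight upper constraint corresponds to a node transmitting at $P$, i.e., a binary allocation; the companion boundary case $P[r_{i,l},l]=0$ is the degenerate regime where one user is switched off entirely and can be handled separately. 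Once two binary allocations are pinned down, the remaining $2L-2$ powers are recovered uniquely from the SINR-matching equations of Lemma \ref{lemma_7.2}, giving exactly the structure the theorem asserts.

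The hard part will be the promotion step, because Lemma \ref{lemma_7.2} supplies SINR matching but says nothing about how many power inequalities are tight. The cleanest route I foresee is to parameterise the upper-right boundary of the feasible region in $(\gamma_1,\gamma_2)$ as a piecewise-smooth curve whose smooth pieces each lie on exactly one level set $P[r_{i,l},l]=P$, and then to show, using the explicit rational form of $P[r_{i,l},l](\gamma_1,\gamma_2)$, that the restriction of $(1+\gamma_1)(1+\gamma_2)$ to any smooth piece has no interior local maximum; this forces the optimum to a vertex where two constraint curves intersect, delivering the required two binary allocations.
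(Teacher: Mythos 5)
Your proposal is correct in outline but takes a genuinely different route from the paper. The paper keeps all $2L$ transmit powers as optimization variables, absorbs the $2(L-1)$ SINR-matching equalities of Lemma \ref{lemma_7.2} into a Lagrangian dual \eqref{eq_max_745}, and then argues (only in sketch, citing page limits) that the Lagrangian is monotone or concave in at least two of the power variables, so those two variables sit at corners $0$ or $P$; the remaining powers are recovered from the equality constraints. You instead use the equality constraints to \emph{eliminate} variables: each hop's matching conditions form a $2\times 2$ linear system whose solution writes every power as a rational function of $(\gamma_1,\gamma_2)$, reducing \eqref{eq_max_73} to a two-dimensional problem over a down-closed feasible set, after which a Pareto-boundary/vertex argument delivers two active constraints. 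Your route buys concreteness: the restriction of $(1+\gamma_1)(1+\gamma_2)$ to a single active level set $P[r_{i,l},l](\gamma_1,\gamma_2)=P$ can be computed explicitly (the level set has the form $\gamma_1=A/(B+C\gamma_2)$ with $A,B,C>0$, and the restricted objective is, in the variable $u=B+C\gamma_2$, of the form $u+\mathrm{const}+D/u$, hence monotone or convex and therefore endpoint-maximized), which makes the ``no interior local maximum on a smooth piece'' claim checkable rather than asserted---arguably tighter than the paper's own curvature sketch. The paper's route buys generality of form (it never needs the closed-form inversion) at the cost of leaving the key non-convexity claim unproved in the text. Two cautions on your write-up: drop the genericity/measure-zero detour, since the theorem is claimed for all channel realizations and your final vertex argument already covers the exceptional alignments; and note that a smooth boundary piece may terminate on an axis $\gamma_i=0$ rather than at a vertex, in which case all $L\ge 2$ of user $i$'s nodes are at zero power, which still yields the required two binary allocations.
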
 
	
	\begin{proof}	
		\noindent According to Lemma \ref{lemma_7.2}, SINRs for each user in all the hopes are equal when the achievable sum-rate of the network is maximized.
		Therefore, the optimization problem in \eqref{eq_max_73} can be re-expressed as,
		\begin{align}
		& {\underset{P[r_{1,l},l],P[r_{2,l},l] \, \forall l} {\textrm{max} }} \bigg(1+\gamma[1,1]\bigg)\bigg(1+\gamma[2,1]\bigg) \nonumber\\
		&{\rm{s.t \;\; \;}}
		\gamma[1,1] = \gamma[1,l], \gamma[2,1] = \gamma[2,l], ~~ \forall l,
		\nonumber\\
		&  \qquad 0 \le P[r_{1,l},l],P[r_{2,l},l] \le P, ~~ \forall l, 
		\label{eq_max_744}
		\end{align}
		where $\gamma[i,l]$ is a function of $r_{i,l}$ and $ P[r_{i,l},l]$ as given in \eqref{snr7_1} with $i\in\{1,2\}$. 
		Due to double differentiability of the objective function and the equality constraints relative to $P[r_{1,l},l]$ and $P[r_{2,l},l] \, \forall l$, \eqref{eq_max_744} can be re-written as an unconstrained optimization problem
		\begin{align}
		& {\underset{\lambda^{(l)}_1,\lambda^{(l)}_2 \, \forall l} {\textrm{max} }} \bigg[{\underset{P[r_{1,l},l],P[r_{2,l},l],\, \forall l} {\textrm{min} }}\; -1 - \gamma[1,1]\gamma[2,1]  \nonumber\\ 
		& \hspace{30pt} + \bigg(\sum_{l=1}^{L-1}\lambda^{(l)}_1 - 1\bigg)\gamma[1,1] + \bigg(\sum_{l=1}^{L-1}\lambda^{(l)}_2 - 1\bigg)\gamma[2,1] \nonumber\\
		& \hspace{50pt} - \sum_{l=1}^{L-1} \biggl( \lambda^{(l)}_2 \gamma[2,1+1] - \lambda^{(l)}_1 \gamma[1,l+1]\biggr) \bigg] \nonumber\\
		&{\rm{s.t \;\; \; }} 0 \le P[r_{1,l},l],P[r_{2,l},l] \le P, ~~ \forall l, 
		\label{eq_max_745}
		\end{align}
		using the Lagrangian dual where $\lambda^{(l)}_1,\lambda^{(l)}_2 \, \forall l$ are the Lagrangian multipliers. Note that the objective function of \eqref{eq_max_745} need to be minimized to achieve the maximum achievable sum-rate. This is because in \eqref{eq_max_745}, we minimize the negative achievable sum-rate. 
		Hereafter, we use the variable $f$ to represent the objective function of \eqref{eq_max_745} and note that $f$ is a variable of $P[r_{1,l},l]$ and $P[r_{2,l},l] ~~ \forall l$. Since the Lagrangian multipliers correspond to the equality constraints, $\lambda^{(l)}_1,\lambda^{(l)}_2 \, \forall l$ can have any real value at the optimum solution of \eqref{eq_max_745}. 
		Whilst not given here due to page limitations, we can show that $f$ is not convex with respect to at least two variables. This can be done by either showing that the first derivative cannot be zero, which indicates that $f$ is either a increasing or decreasing function. In case that the first derivative can be zero, it can be shown that the second derivative is either negative or zero which indicates that $f$ is either a concave function or a increasing/decreasing function. Therefore, $f$ is minimized at the corner points implying that the maximum achievable sum-rate is obtained at zero or maximum transmit power for at least two of the variables out of $P[r_{1,l},l]$ and $P[r_{2,l},l] ~~ \forall l$. Therefore, we can conclude that at least for two transmitting nodes either zero power or maximum transmit power is optimum irrespective of the value of the Lagrangian multipliers. Finally, the values of other $2(L-1)$ transmit powers can be obtained by solving the $2(L-1)$ equality constraints in \eqref{eq_max_745} which link all $2L$ transmit powers. This concludes the proof of Theorem~\ref{theorem_7.2}. 
	\end{proof}	
	
	Thus, for a two user network, the optimum solution can be analytically obtained for the power control problem in \eqref{eq_max_73} by considering that the optimum power allocation for at least two transmitting nodes is zero or maximum power. 

	\section{Numerical and Simulation Results}\label{section-simulation}
	In this section, we present simulation results to illustrate the performance of our proposed joint relay selection and power control solution. 
		
	We note that the joint relay selection and power control problem has not been considered in the literature for a multi-user, multi-hop relay network when the objective is maximization of the achievable sum-rate. Therefore, the performance of our proposed algorithm is compared against two commonly used reference techniques namely, the greedy relay selection and the random relay selection. Under the greedy relay selection, each user selects the best path from the set of available relay nodes, using the dynamic programming based optimal relay selection proposed in \cite{5982498}, according to a priority order. Under the random relay selection, each user randomly selects a relay path without any conflict. Under both these reference techniques, we consider power control via SINR matching. The channel gain between any two nodes are computed based on both slow and fast fading. Slow fading in terms of path loss is considered by setting the distance of all $L$ hops to $2$ km with equal distance between hops and a path loss exponent $3.6$. Rayleigh fading distribution with zero mean and unit variance is considered in terms of fast fading. For all the simulation examples, the user defined threshold, $e_{\textrm{th}}$, is fixed to $10^{-3}$. AWGN noise variance, $\sigma^2\!=\!kTB$, where $k$ is the Boltzmann’s constant, $T\!=\!290$ K is the ambient temperature and $B\!=\!200$ kHz is the equivalent noise bandwidth.
	
	\begin{figure}
		\centering\includegraphics[width=0.75\textwidth]{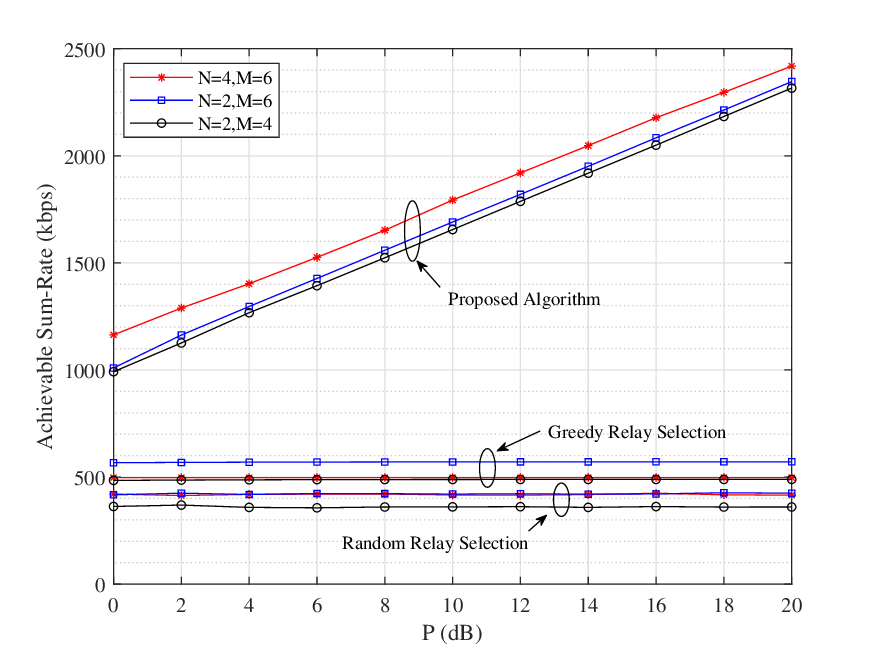}
		\captionsetup{justification=centering}
		\caption{Achievable sum-rate versus $P$ with $M=6, L=6$ dB}
		\label{figure74}
	\end{figure}
	Fig. \ref{figure74} plots the behavior of the achievable sum-rate against $P$ with $N\!=\!2,4, M\!=\!4,6$ and $L\!=\!6$, when joint optimization proposed in Algorithm \ref{Algorithm_72} is employed. From the plot, it can be observed that the resulting sum-rate from the proposed solution increases slightly with $M$ while the increment with both $P$ and $N$ is significant. When relay selection is performed by considering the interference, we can increase the overall network sum-rate with $P$ by controlling the interference while improving the received SNR. As $N$ increases, the number of summation terms in \eqref{eq_max_71} increases. Even though the individual rate of each user decreases due to the extra interference, by controlling the interference, we can increase the overall network sum-rate. As $M$ increases, the number of available relay combinations increases thus improving the probability of a user selecting a relay path with larger gain and lower interference increases. However, as the greedy relay selection does not consider interference when performing relay selection, the achievable sum-rate only depends on the number of available relay combinations. As such, the sum-rate of the greedy relay selection remains constant with $P$ and increases with $M\!-\!N$. On the other hand, the random relay selection remains constant with respect to both $P$ and $N$ while increases with $M$. As a result, the proposed algorithm improves the sum-rate of multi-user relay networks in the presence of interference, in comparison to existing relay selection solutions.

 		Fig. \ref{figure73} plots the behavior of the achievable sum-rate against $L$ with $N=2, M=6$ and $P=10$ dB, when when joint optimization proposed in Algorithm \ref{Algorithm_72} is employed. From the plot, it can be observed that the sum-rate resulted from Algorithm \ref{Algorithm_72} increases with $L$ where as that of the two reference techniques slightly decreases with $L$. For a given hop, the distance between transmitting and receiving nodes decreases with increasing $L$. This increases the sum-rate due to the reduction in the path loss between two nodes. However, as the two reference techniques make their relay selection decision purely based on the SNR, there is a high probability of selecting a relay path with high interference. As the reduction in path loss increases the interference as well, the sum-rate slightly decreases with $L$ for the greedy relay selection and random relay selection.
 	\begin{figure}
 		\centering\includegraphics[width=0.75\textwidth]{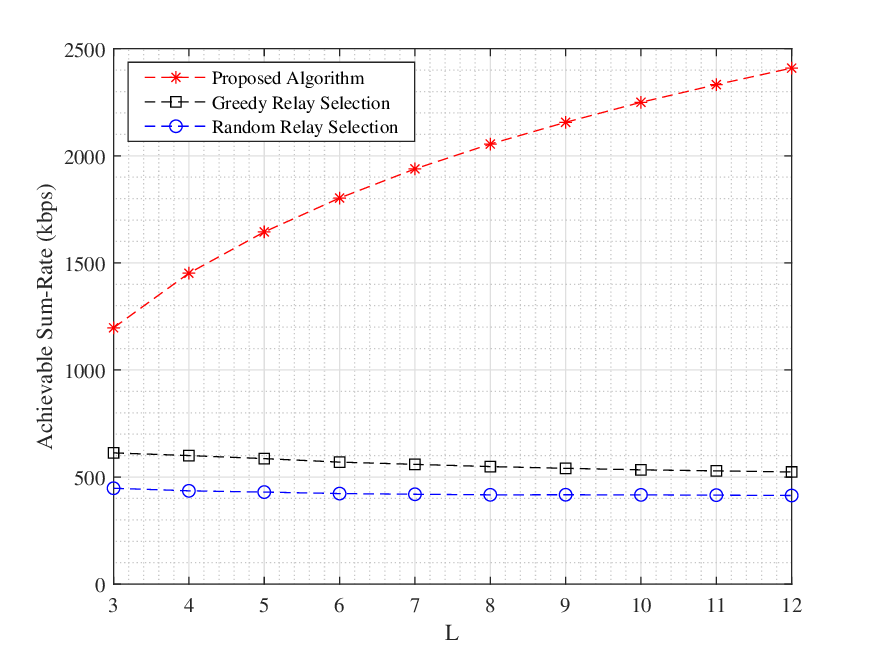}
 		\captionsetup{justification=centering}
 		\caption{Achievable sum-rate versus $L$ with $N=2, M=6, P=10$ dB}
 		\label{figure73}
 	\end{figure}
 
	Fig. \ref{figure75} plots the average computation time taken by Algorithm \ref{Algorithm_72} and the reference techniques versus $L$ when $N=2, M=6$ and $P=10$ dB. From the plot, it can be observed that the computation time of our proposed algorithm increases with $L$. We also observe that compared to the computation time of our proposed algorithm, that of the two reference techniques are significantly smaller. As such, when we consider the sum-rate performance and the complexity, a clear trade-off can be observed. As both relay selection strategies used in the greedy relay selection and Algorithm \ref{Algorithm_72} have linear complexity relative to $L$, the difference associated with the computation time is due to the iterative approach considered in Algorithm~\ref{Algorithm_72}. Therefore, we next analyze the complexity of the proposed algorithm in terms of the number of iterations for convergence.
	\begin{figure}
		\centering\includegraphics[width=0.75\textwidth]{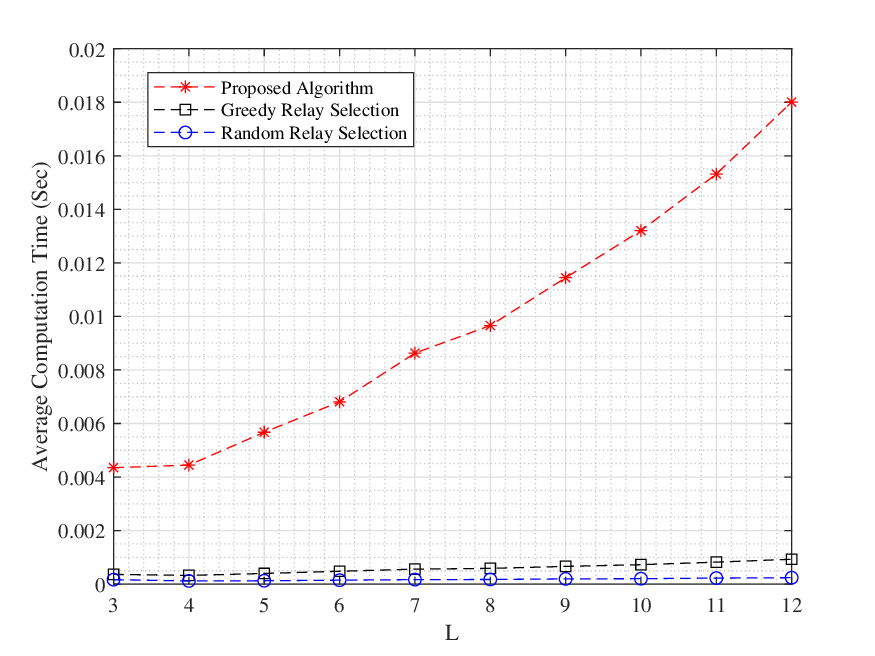}
		\captionsetup{justification=centering}
		\caption{Average computation time versus $L$ with $N=2, M=6, P=10$ dB}
		\label{figure75}
	\end{figure}

	Fig. \ref{figure76} plots the total number of iterations including both inner and outer loops of Algorithm \ref{Algorithm_72} versus $L$ when $N=2, M=6$ and $P=10$ dB. From the figure, we observe that as $L$ increases, the number of iterations increases as well. Since, the increment in the number of iterations is linear, we can conclude that the proposed algorithm has linear complexity with respect to the total number of iterations.
	\begin{figure}
		\centering\includegraphics[width=0.75\textwidth]{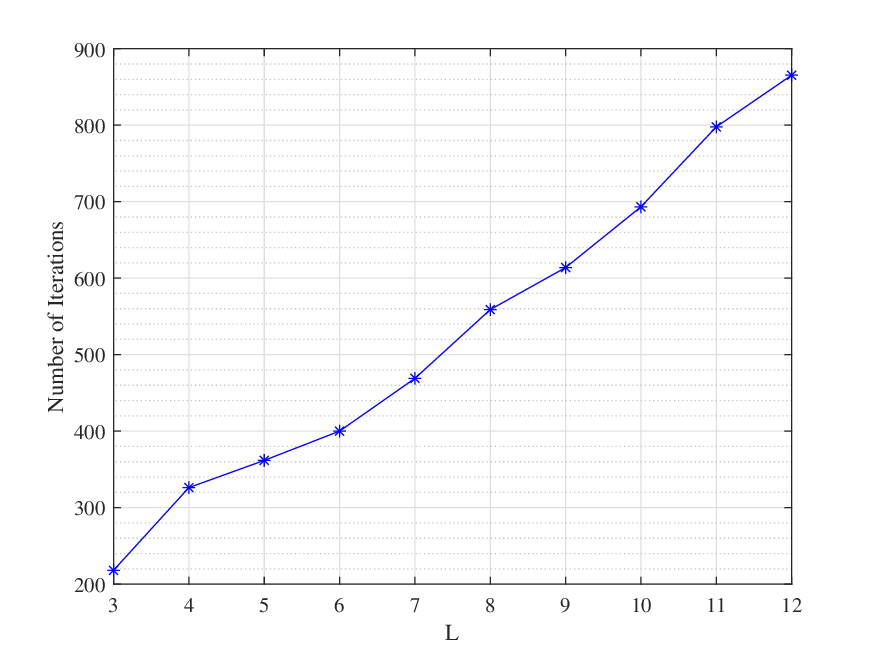}
		\captionsetup{justification=centering}
		\caption{Average number of iterations versus $L$ with $N=2, M=6, P=10$ dB}
		\label{figure76}
	\end{figure}  		
	
	\section{Conclusion} \label{section-conclusion}
	We considered the achievable sum-rate optimization problem in a general multi-user, multi-hop relay network with multiple relay nodes in each hop. First, we investigated the suitability of five sub-optimal relay selection strategies that have been considered for single-user multi-hop relay networks, namely the hop-by-hop relay selection, ad-hoc relay selection, block-by-block relay selection, sliding window based relay selection and max-min relay selection. It is shown that the dynamic programming based max-min relay selection with the objective of maximizing the minimum SINR results in higher achievable sum-rate gain compared to other sub-optimal relay selection strategies with the objective of maximizing the achievable sum-rate. Next, we combined the max-min relay selection and the tight lower bound approximation based power control to present a novel iterative algorithm. Our proposed algorithm performs joint relay assignment and power control in a such a way that the achievable sum-rate is maximized. Further, we proved that for the special case of two-user networks, binary power allocation is optimum for at least two transmitting nodes. Transmit power of other nodes can be obtained by considering that the received SINR of each user is equal over all the hops.
	
	The interference management technique that treats interference as noise is close to optimal when the interference is sufficiently weak. In addition, that reduces the complexity involved with the receivers compared to successive interference cancellation. Therefore, in this work, we considered that single user decoding is performed at each
	receiver with interference treated as noise. However, with the discussion of the research community moving towards successive interference cancellation, a desirable extension would be to consider joint relay selection and power control in the presence of successive interference cancellation.
	
	\begin{appendices}			
	\section{Proof of Lemma \ref{lemma_7.2}}\label{app7:2}
	In this section we provide the proof of Lemma \ref{lemma_7.2}. Let  $R^*$ denotes the optimum achievable sum-rate that results from the transmit power vector $[P[r_{1,l},l]^*, P[r_{2,l},l]^*] ~~ \forall l \in \{0,...,L-1\}$ with $P[r_{1,l},l]^*$ and $P[r_{2,l},l]^*$  denoting the optimum transmit powers of nodes $r_{1,l}$ and $r_{2,l}$ in hop $l$, respectively.  
	Let the resulting optimum SINRs for the $l^{\mathrm{th}}$ hop and the resulting overall minimum optimum SINRs of $s_1$ and $s_2$ be denoted by $\gamma[1,l]^{*}, \gamma[2,l]^{*},\gamma_1^{*}$ and $\gamma_2^{*}$, respectively. As a result, we can write $R^* = \log_2(1+\gamma_1^{*}) + \log_2(1+\gamma_2^{*})$.
	We start the proof by assuming that the two users do not have equal SINRs in all the hops at the same time, i.e, $\gamma[1,l]^{*} = \gamma_1^{*}$ and $\gamma[2,l]^{*} = \gamma_2^{*}$ for all $l \in \{1,...,L\}$ does not happen simultaneously. In the following, we consider the two possible scenarios resulting from the above assumption. 
	
	\subsection*{Scenario 1 - Only one user has equal SINRs in all the hops}
	Without the loss of generality, let us assume that the first user has a higher SINR in the first hop such that $\gamma[1,1]^{*} > \gamma_1^{*}$ and the second user has equal SINRs in all the hops. Next, we change the power values for $s_1$ and $s_2$ as $P[r_{1,0},0] = P[r_{1,0},0]^* - x_1$ and $P[r_{2,0},0] = P[r_{2,0},0]^* - y_1$ such that $\gamma[1,1] = \gamma_1^{*}$ and $\gamma[2,1] = \gamma_2^{*}$. 
	Based on the values of $x_1, y_1$ and considering the fact that $\gamma[1,1]^{*}>\gamma_1^{*}$, it can be shown that the new power values $P[r_{1,0},0]$ and $P[r_{2,0},0]$ falls within $0$ and $P$. 
	Therefore, we can achieve $\gamma[1,1] = \gamma_1^{*}$ and $\gamma[2,1] = \gamma_2^{*}$ for the same optimum achievable sum-rate $R^*$. Likewise, We can update the transmit power values of any hop where the first user has a higher SINR without changing $R^*$ following a similar approach. 
	
	\subsection*{Scenario 2 - None of the users have equal SINRs in all the hops}
	Under this scenario, two users either can have their corresponding higher SINRs in the same hop or in two different hops. 
	
	Let us first consider the situation where two users have their corresponding higher SINRs in the same hop. Without loss of generality, let us assume that both users have higher SINRs in the first hop such that $\gamma[1,1]^{*} > \gamma_1^{*}$ and $\gamma[2,1]^{*} > \gamma_2^{*}$. Next, we change the power values for $s_1$ and $s_2$ as $P[r_{1,0},0] = P[r_{1,0},0]^* - x_2$ and $P[r_{2,0},0] = P[r_{2,0},0]^* - y_2$ such that $\gamma[1,1] = \gamma_1^{*}$ and $\gamma[2,1] = \gamma_2^{*}$. 
	Similar to scenario 1, 
	we can show that the new power values $P[r_{1,0},0]$ and $P[r_{2,0},0]$ falls within $0$ and $P$. 
	Therefore, we can achieve $\gamma[1,1] = \gamma_1^{*}$ and $\gamma[2,1] = \gamma_2^{*}$ for the same optimum achievable sum-rate $R^*$. Again, we can update the transmit power values of any hop where both users have higher SINRs in the same hop without changing $R^*$ following a similar approach. 
	
	Let us now consider the situation where two users have their corresponding higher SINRs in two different hops. Without loss of generality, let us assume that the first user has a higher SINR in the first hop and the second user has a higher SINR in the second hop such that $\gamma[1,1]^{*} > \gamma_1^{*}$ and $\gamma[2,2]^{*} > \gamma_2^{*}$. Next, we change the power value for $s_1$ as $P[r_{1,0},0] = P[r_{1,0},0]^* - x_3$ such that $\gamma[1,1] = \gamma_1^{*}$. 
	Similar to scenario 1, we can show that $0 < P[r_{1,0},0] < P$. However, this would results in $\gamma[2,1] > \gamma_2^{*}$. Therefore, after updating the transmit power values of any hop with a higher SINR for one user, we will have a network with only one user having equal SINRs which is considered under scenario 1.
	
	The fact that under both above scenarios, we can achieve the same achievable sum-rate $R^*$ such that $\gamma[1,l] = \gamma_1^{*}$ and $\gamma[2,l] = \gamma_2^{*}$ for all $l \in \{1,...,L\}$ completes the proof of Lemma \ref{lemma_7.2}.	
				
	\end{appendices}

	\bibliography{References}
	\bibliographystyle{IEEEtran}
	
\end{document}